\newcommand{\removelatexerror}{\let\@latex@error\@gobble}
\newtheorem{remark}{Remark}[]
\DeclareMathOperator*{\argmin}{arg\,min}
\title{\LARGE \bf
A Novel Warehouse Multi-Robot Automation System with Semi-Complete and Computationally Efficient Path Planning and Adaptive Genetic Task Allocation Algorithms
}
\author{Kam Fai Elvis Tsang, Yuqing Ni, Cheuk Fung Raphael Wong and Ling Shi
\thanks{Kam Fai Elvis Tsang, Yuqing Ni, Cheuk Fung Raphael Wong and Ling Shi are with the Department of Electronic and Computer Engineering, Hong Kong University of Science and Technology, Hong Kong. Emails: {\tt\small \{kftsang, yniac, cfrwong, eesling\}@ust.hk}}%
\thanks{The work is supported by a Hong Kong ITC research fund ITS/066/17FP-A.}
}
\begin{document}

\newtheorem{lemma}{Lemma}
\newtheorem{theorem}{Theorem}
\newtheorem{problem}{Problem}
\newtheorem{example}{Example}
\newtheorem{assumption}{Assumption}
\newtheorem{definition}{Definition}

\maketitle
\thispagestyle{empty}
\pagestyle{empty}

\setlength{\arraycolsep}{0.15em}

\begin{abstract}

We consider the problem of warehouse multi-robot automation system in discrete-time and discrete-space configuration with focus on the task allocation and conflict-free path planning. We present a system design where a centralized server handles the task allocation and each robot performs local path planning distributively. A genetic-based task allocation algorithm is firstly presented, with modification to enable heuristic learning. A semi-complete potential field based local path planning algorithm is then proposed, named the recursive excitation/relaxation artificial potential field (RERAPF). A mathematical proof is also presented to show the semi-completeness of the RERAPF algorithm. The main contribution of this paper is the modification of conventional artificial potential field (APF) to be semi-complete while computationally efficient, resolving the traditional issue of incompleteness. Simulation results are also presented for performance evaluation of the proposed path planning algorithm and the overall system.
\end{abstract}

\section{Introduction}
In the recent decades, the rapid advancement of multi-robot systems has been attractive to both academia and industries because of the wide range of potential applications. Some of the promising applications include transportation, industrial plant inspection \cite{mrta_mrpp_astar_ga} and logistic management \cite{mrta_hospital}. In particular, conflict-free path planning algorithm \cite{mrpp_mstar} and task allocation policy \cite{mrta_hospital} are the main challenges in multi-robot automation system such as warehouse management which will be the focus of this paper. Unfortunately, both of these problems are in general NP-hard \cite{mrta_mrpp}, thus of great incentives to investigate.

Multi-robot task allocation refers to the process of allocating tasks and the execution orders for each robot in the system. Although it can normally be considered as a combinatorial optimization problem \cite{mrta_cnp_nn}, it is difficult to solve efficiently due to the large search space. Yuan et al. \cite{mrta_cnp_nn} introduced a neural network based solution for effective bidding in auction at the expense that a large training set is necessary. Liu and Kroll \cite{mrta_mrpp_astar_ga} implemented genetic algorithm (GA) to generate task allocations but the performance greatly depends on the quality of heuristics for the fitness function. Whilst each of these approaches have their merits, their respective flaws have limited their performances to a large extent.

Path planning algorithms can be categorized into global and local planning \cite{rpp_svm}. Global planning assumes all information are available and plan the robot path accordingly. Despite the promising solutions contributed by global information, it is computationally expensive thus typically infeasible for large scale system to operate at real time. Local planning, on the other hand, mainly considers the immediate or recent ambience of the robot for online planning with low computation load. A widely used computationally efficient path planning algorithm is the artificial potential field (APF) algorithm \cite{mrpp_apf}, albeit being incomplete due to local minima which has been a challenging problem \cite{rpp_apf_deadlock}, the details of which will be discussed in depth in \autoref{section:main}. Numerous attempts have been made to overcome this issue. For example, Tuazon et al. \cite{mrpp_apf_fuzzy} proposed the integration of fuzzy logic into the APF algorithm to identify and escape from local minimum. Kov\'{a}cs et al. \cite{mrpp_apf_animal} implemented the BUG algorithm in additional to APF method as a complement. While these approaches showed effectiveness in simulation, seldom did they prove the completeness of the algorithms. 

This paper presents a semi-complete and computationally efficient potential-based local path planning algorithm, named the recursive excitation/relaxation artificial potential field algorithm, hereinafter referred to as the RERAPF algorithm. It has completely eradicated the incompleteness flaw of APF algorithm, and the proof of completeness will be presented in later sections. In addition, we also present a genetic-based task allocation algorithm and an adaptive integrated system with learning ability for the fitness function of the genetic algorithm to improve the overall system performance. We show in simulation the significant improvement in computation speed of RERAPF compared with A* algorithm, and the overall performance of the integrated system.

The remaining of the paper is organized as follows. Some preliminaries and notations are defined in \autoref{section:prelim}, then the problem setup is introduced in \autoref{section:problem}. The main results, including the proposed algorithms and proof of completeness, are presented in \autoref{section:main} and the simulation results in \autoref{section:sim}.

\section{Preliminaries and Notations}
\label{section:prelim}

In this paper, a warehouse is defined as a grid world $\mathcal{W}$, an example layout of which is shown in \autoref{fig:layout}, with $N$ homogenous autonomous robots, $M$ static obstacles (including storage shelves and walls), $P$ reachable positions and $K$ tasks. Denote $\mathcal{R} = \{r_1, \dots, r_N \}$ as the set of robots, $\mathcal{O} = \{o_1, \dots, o_M\} $ the set of obstacles, $\mathcal{T} = \{t_1, \dots, t_K \}$ the set of tasks and $\mathcal{S} = \{s_1, \dots s_P\} \subset \mathbb{Z}^2$ the set of all reachable positions. Note that $\mathcal{W} = \mathcal{S} \cup \mathcal{O}$. A position $s$ is represented by a positional vector $[x,y]' \in \mathcal{S}$. Each robot $r_i \in \mathcal{R}$ has a position $s_{r_i}(k)$ at time $k$, a goal position $s_g^i$ and an ordered set of $K_i$ tasks $T_i = \{ t_{i,1}, \dots t_{i,K_i} \} \subset \mathcal{T}$. Also each obstacle $o_i \in \mathcal{O}$ and task $t_i \in \mathcal{T}$ has a position $s_{o_i}$ and $s_{t_i}$ respectively. When $T_i \neq \emptyset$, the goal position for $r_i$ is simply the position of first task, i.e., $s_g^i = s_{t_{i,1}}$. 

\begin{figure}[tbp]
\def\gridsize{0.3}
\def\gridwidth{19}
\def\gridheight{21}
\centering
\begin{tikzpicture}[box/.style={rectangle,draw=black,very thick, minimum size=\gridsize cm}, robot/.style={circle,fill=green,minimum size=0.25 cm,inner sep=0pt}]
\foreach \i   [evaluate=\i as \x using \i*\gridsize] in {0,...,\gridwidth} {
	\foreach \j   [evaluate=\j as \y using \j*\gridsize] in {0,...,\gridheight} {
        \node[box] at (\x,\y){};
    }
}

\foreach \j   [evaluate=\j as \y using \j*\gridsize] in {0,...,\gridheight} {
    \node[box, fill=red] at (0,\y){};
    \node[box, fill=red] at (5.7,\y){};
}

\foreach \i   [evaluate=\i as \x using \i*\gridsize] in {0,...,\gridwidth} {
    \node[box, fill=red] at (\x,0){};
    \node[box, fill=red] at (\x,6.3){};
}

\foreach \i   [evaluate=\i as \x using \i*\gridsize] in {3,4,7,8,11,12,15,16} {
	\foreach \j   [evaluate=\j as \y using \j*\gridsize] in {3,4,5,6,9,10,11,12,15,16,17,18} {
        \node[box, fill=red] at (\x,\y){};
    }
}

\node[robot] at (3,5.1){};
\node[robot] at (0.3,2.7){};
\node[robot] at (5.4,0.6){};
\node at (3,5.1) {\footnotesize 0};
\node at (0.3,2.7) {\footnotesize 1};
\node at (5.4,0.6) {\footnotesize 2};
\node at (2.4,1.5) {\footnotesize \textcolor{white}{0}};
\node at (1.2,4.8) {\footnotesize \textcolor{white}{1}};
\node at (4.8,2.7) {\footnotesize \textcolor{white}{2}};
\node at (3.6,2.1) {\footnotesize \textcolor{white}{3}};

\node[box,fill=red] at (6.25,5){};  
\node[box,fill=red] at (6.25,4){};  
\node[box] at (6.25,4.5){};
\node[robot] at (6.25, 4.5){};
\node at (6.25,4.5) {\footnotesize 0};
\node at (6.25,4.0) {\footnotesize \textcolor{white}{0}};
\node[anchor=west] at (6.55,5){Obstacles};
\node[anchor=west] at (6.55,4.5){Robots};
\node[anchor=west] at (6.55,4){Tasks};

\end{tikzpicture}
\caption{Warehouse layout}
\label{fig:layout}
\end{figure}

Consider the metric space $M_p=(\mathcal{S}, d_p)$ with metric $d_p(s_i, s_j) = \vert \vert s_i - s_j \vert \vert_p$, i.e., the $p$-norm. We define the neighborhood of $s$, denoted by $\mathcal{N}(s)$, as the closed unit ball in $M_1$ centered at $s$, i.e., $\mathcal{N}(s) = \overline{B_1(s)} = \{s' \in \mathcal{S} \colon d_1(s, s') \leq 1 \}$, and the adjacent neighborhood of $s$ as $\mathcal{N}_a(s) = \mathcal{N}(s)\backslash\{s\}$. Also, the adjacent neighborhood of a space $U$ is defined as $\mathcal{N}_a(U) = \{s' \in \mathcal{S} \colon \min_{s \in U} d_1(s, s') = 1 \}$. The set of successors for the robot $r_i$ at time $k$ is $\mathcal{N}(s_{r_i}(k))$.

\section{Problem Setup}
\label{section:problem}

\begin{figure}[h]
	\centering
	\tikzstyle{block} = [draw, fill=white!20, rectangle, minimum height=3em, minimum width=6em]
	\tikzstyle{sum} = [draw, fill=white!20, circle, node distance=1cm]
	\tikzstyle{input} = [coordinate]
	\tikzstyle{output} = [coordinate]
	\tikzstyle{pinstyle} = [pin edge={to-,thin,black}]
	
	\begin{tikzpicture}[auto, node distance=2cm,>=latex', scale=0.7, every node/.style={scale=0.7}]		
		\node [input, name=input1] at (1.5, -0.5) {};
		\node [input, name=input2, below of=input1, node distance=0.5cm] {};
		
		\node [block, right of=input1, node distance=5.75cm, yshift=-0.25cm] (task_scheduler) {Task Scheduler};
		\node [right of=input1, node distance=4.65cm] (task1) {};
		\node [right of=input2, node distance=4.65cm] (task2) {};
		\node [right of=input1, node distance=6.85cm] (task3) {};
		\node [right of=input2, node distance=6.85cm] (task4) {};
		\node [block, below of=input2, node distance=1cm] (estimator) {Estimator};
		\node [block, below of=estimator, node distance=1.5cm] (sensor) {Position Sensor};
		
		\node [block, below of=task_scheduler, node distance=3.75cm] (controller) {Controller};
		\node [block, left of=controller, node distance=2.75cm] (system) {System};
		\node [block, right of=controller, node distance=2.75cm] (path) {Path Planner};
		\node [right of=controller, node distance=2.5cm, yshift=0.395cm] (path1) {};
		\node [right of=controller, node distance=3cm, yshift=0.395cm] (path2) {};
				
		\node [block, below of=controller, node distance=1.25cm] (sensor2) {Proximity Sensor};
	
		\draw [->]	(input1) -- (task1);
		\draw [->]	(estimator) |- (task2);
		\draw [->] 	(sensor) -- (estimator);
		\draw [->] 	(estimator) -| (controller);
		\draw [->] 	(system) -| (sensor);
		\draw [->]	(task3) -| (path2);
		\draw [->]	(path1) |- (task4);
		\draw [->] 	(path) -- (controller);
		\draw [->] 	(controller) -- (system);
		\draw [->]	(system) |- (sensor2);
		\draw [->]	(sensor2) -| (path);
		
		\draw [dashed, thick, rounded corners] (5.7,0) rectangle (8.79,-1.5);
		\draw [dashed, thick, rounded corners] (0.1,-1.2) rectangle (2.9,-4.3);
		\draw [dashed, thick, rounded corners] (3.15,-3.65) rectangle (11.35,-6.55);
		
		\node [align=center] at (2.5,-4.75) {$s_{r_i}$};
		\node [align=center] at (7.5,-2.75) {$\hat{s}_{r_i}$};
		\node [align=center] at (4.5,-1.25) {$\hat{s}_{r_i}$};
		
		\node [align=center] at (7.2, 0.35) {Central Server};
		\node [align=center] at (7.25, -6.85) {Autonomous Robots};
		\node [align=left] at (0.65, -4.575) {Observer};
		
		\node [align=left] at (1.25, -0.45) {$\mathcal{T}$};
		\node [align=left] at (10.5, -2.75) {$T_i$};
		\node [align=left] at (9.5, -2.75) {$D_i$};
		
	\end{tikzpicture}
	\caption{Architecture of proposed warehouse automation system}
	\label{fig:top_level_sys}
\end{figure}
We consider a warehouse multi-robot automation system shown in \autoref{fig:top_level_sys} on a discrete time horizon $k=0,1,\ldots \in \mathbb{N}_0$. The overall system is divided into three subsystems, namely the centralized server, observer and autonomous robots. 

The central server handles incoming tasks, and perform task allocation algorithm to allocate tasks $T_i$ from $\mathcal{T}$ to each robot $r_i$ in a centralized manner. The central server communicates with the robots via a wireless channel, which is assumed to have no bit error or packet drop for simplicity.

Each robot is equipped with proximity sensors to retrieve local information at each time $k$, with a sensing range $\mathcal{P}(s_{r_i}(k)) \subset \mathcal{S}$. We further extend the notation to $\mathcal{P}(s)$ representing the sensing range for any robot at $s$. When $r_i$ receives the task assignment from the server, they will distributively perform path planning to reach $s_g^i$ based on the ambient information retrieved by proximity sensors. When the robot $r_i$ has reached its goal, it will feedback the distance travelled $D_i$ to the central server to improve the task allocation performance. 

The observer consists of position sensor and estimator for estimation of robot positions $s_{r_i}$, denoted by $\hat{s}_{r_i}$. In this paper, we assume a perfect estimator such that $\hat{s}_{r_i} = s_{r_i}$. This subsystem can be an on-board system of the robot such as odometer, or a separate subsystem such as visual camera. 

We aim to design the task scheduler and path planner, which are handled in centralized and distributive manners respectively. To evaluate the system performance, we first define a path cost $J_1$ for the proposed path planning algorithm, to be the ratio of the total travel distance $D_i$ to optimal distance $D_i^*$ (determined by A* search) of each $r_i$. This quantifies the optimality of the path planning algorithm.
\begin{equation} 
	J_1	= \dfrac{\sum_{i=1}^N D_{i}}{\sum_{i=1}^N D_{i}^*}
\end{equation} 
As for the task allocation policy, the paramount concern is the travel distance for each robot $r_i$ with tasks $T_i$. We define an average distance cost function $J_2$ as the average travel distance per robot per task. This is directly correlated to the energy consumption. In addition, the bottleneck for completion time is the longest distance travelled amongst the robots since the system has to wait for the slowest robot to finish its tasks. We further define a $J_3$ as the bottleneck distance per task, as follows:
\begin{eqnarray}
	J_2	&=& \dfrac{1}{KN}\sum_{i=1}^N D_i 	\\
	J_3	&=& \dfrac{1}{K}\max_{i \in [1,N]} D_i
\end{eqnarray}
Finally, we also hope to evaluate the entire system from the perspective of task completion efficiency. Define $J_4$ as the task completion rate, given in
\begin{equation}
	J_4	=	\dfrac{K}{k_{total}}
\end{equation}
where $k_{total}$ is the total time needed to finish all $K$ tasks.
The design problem is to individually minimize $J_1, J_2, J_3$ and maximize $J_4$ to ensure an effective and efficient warehouse automation system.

\section{Main Results}
\label{section:main}
We present our main results, i.e., the proposed genetic multi-robot task allocation algorithm and RERAPF algorithm in this section. The generic GA and conventional APF algorithm are first described in each subsection. The proposed solutions are then built upon each of these algorithms.
\subsection{Genetic Multi-Robot Task Allocation Algorithm}
In order to allocate the tasks effectively and efficiently, a meta-heuristic approach is adopted. In this subsection, genetic algorithm is used along with a proposed learning rule to adapt for the warehouse and multi-robot setup. It should be noted that other meta-heuristic algorithm could also be used with the same learning rule.
\subsubsection{Generic Genetic Algorithm}
In the generic GA, a chromosome is used to represent the potential solution, which, in this case, is the task allocation policy. A fitness function $F$ is defined to evaluate the quality of the chromosomes where higher fitness means better quality. An outline of the standard and generic GA is presented in Algorithm \ref{algo:ga} \cite{Tomioka2007}. A more detailed description of the GA can be found in \cite{Tsang2018}. The detailed design of chromosome representation, fitness function, crossover and mutation operations are described in the following subsections.

\begin{figure}[h]
 \removelatexerror
  \begin{algorithm}[H]
  	\label{algo:ga}
	\caption{Generic Genetic Algorithm}
	$n \leftarrow 0$ \\
	initialize population $P(n)$ \\
	evaluate the fitnesses of $P(n)$ \\
	\While {$n < n_{max}$} {
		choose the best candidates of $P(n)$ to form $P_p(n)$ \\
		crossover $P_p(n)$ to reproduce $P_c(n)$	 \\
		mutate $P_c(n)$ probabilistically	\\
		evaluate the fitnesses of $P_c(n)$ \\
		choose the best candidates as next generation with same population size	\\
		$n \leftarrow n+1$ \\
	}
  \end{algorithm}
\end{figure}

\subsubsection{Chromosome Representation}
The chromosome representation is used to encode tasks allocation policy for the robots. For a set of $N$ robots and $K$ tasks, the length of chromosome is $N+K-1$ and each element of the chromosome is called a genome. In each chromosome, $K$ genomes represents the tasks index, and $N-1$ genomes represent the delimiters of chromosome which are arbitrary negative integers, hereinafter represented by $-1$ to $-(N-1)$. The set of all possible chromosomes $\mathcal{C}$ is all permutations of $\{ 1, 2, \dots, K, -1, -2, \dots, -(N-1) \}$ and each set of tasks are encoded as consecutive genomes, separated by a delimiter. In general, any $C \in \mathcal{C}$ has the form $C = \{c_1, \dots c_{N+K-1} \} = \{T_1, n_1, T_2, n_2, \dots, n_{N-1}, T_{N}\}$, where $n_i$ is a negative integer as delimiter, represents the task allocation policy for robot $r_i$ to $r_N$. For example, the sets of tasks $T_1 = \{t_3, t_5, t_1\}$, $T_2 = \{t_4, t_6\}$, $T_3 = \{t_2, t_7\}$, $T_4 = \emptyset$ can be encoded as $\{3,5,1,-1,4,6,-2,2,7,-3\}$.
\subsubsection{Fitness Function} The fitness function, based on the cost function in \autoref{section:problem}, directly reflects the quality of the chromosome, hence task allocation. A higher value of fitness function means higher quality. The fitness function $F \colon \mathcal{C} \rightarrow \mathbb{R}$ is shown in \eqref{eq:fitness}.
\begin{equation}
	\label{eq:fitness}
	F(C)	=	\bigg( \dfrac{1}{KN}\sum_{i=1}^N \hat{D}_{i}(T_i) + \displaystyle \dfrac{1}{K} \max_i \hat{D}_{i}(T_i) \bigg)^{-1}
\end{equation}
where $\hat{D}_{i}(T_i)$ is the heuristic of the travelled distance for robot $r_i$ with $T_i$, and is given by 
\begin{equation}
	\hat{D}_{i}(T_i) = \hat{d}\big(s_{r_i}(k_0), s_{t_{i,1}}\big) + \sum_{j=1}^{K_i-1} \hat{d}(s_{t_{i,j}}, s_{t_{i,j+1}})
\end{equation}
and $\hat{d}(s_i, s_j) = d_1(s_i, s_j)$ is the initial heuristic of distance between $s_i$ and $s_j$. This fitness function is in similar form as $(J_2 + J_3)^{-1}$ as it aims to reduce the average and bottleneck distance cost simultaneously. If $(J_2 + J_3)^{-1}$ is at global maximum, then $J_2, J_3$ are both at their global minimum.
\subsubsection{Crossover and Mutation} The crossover operation is used to reproduce offsprings $C_{c1}$ and $C_{c2}$ from two parents $C_{p1}$ and $C_{p2}$. \autoref{fig:crossover} shows an example of this operation. Two crossover points $i, j$ are randomly chosen. Then the $i$-th to $j$-th genomes of $C_{p1}$, labeled in blue, are preserved to the child, and the unused genomes $C_{p2}$, labeled in red, will fill the remaining genomes of $C_{c1}$. The second offspring $C_{c2}$ is reproduced with the same procedure with reversed roles of $C_{p1}$ and $C_{p2}$. 
\begin{figure}[hbt]
	\centering
	\begin{tabular}{l|cccccccccc}
	Parent 1 & 3 & -2 & \textcolor{blue}{1} & \textcolor{blue}{2} & \textcolor{blue}{5} & \textcolor{blue}{6} & 4 & -1 & 7 & -3	\\
	Parent 2 & \st{6} & \st{2} & \textcolor{red}{-1} & \textcolor{red}{4} & \textcolor{red}{3} & \textcolor{red}{-3} & \textcolor{red}{7} & \textcolor{red}{-2} & \st{5} & \st{1}	\\
	Child 1 & \textcolor{red}{-1} & \textcolor{red}{4} & \textcolor{blue}{1} & \textcolor{blue}{2} & \textcolor{blue}{5} & \textcolor{blue}{6} & \textcolor{red}{3} & \textcolor{red}{-3} & \textcolor{red}{7} & \textcolor{red}{-2}
	\end{tabular}
	\caption{Example of chromosome crossover operation}
	\label{fig:crossover}
\end{figure} \\
The mutation operation is relatively simpler. Two random points $m,n$ are first chosen. Then one randomly permute the $m$-th to $n$-th genomes. This is also called the scramble mutation.
\subsubsection{Learning Rule}
After the robots have finished the paths from any $s_i$ to $s_j$, the actual travelled distance $D_i(s_i, s_j)$ will be used to update the heuristic $\hat{d}(s_i, s_j)$. To generalize the learning rule, let $\mathbf{\hat{D}} = [\hat{d}_{ij}]$ where $\hat{d}_{ij} = \hat{d}(s_i, s_j)$ and $\mathbf{A} = [a_{ij}]$ where $a_{ij} = \lambda D_i(s_i, s_j) + (1-\lambda) \hat{d}_{ij}$. The $\lambda$ is a binary indicator of whether or not an update is received. We consider a quadratic error cost function $J$, given by
\begin{equation}
	J = \dfrac{1}{2} \text{Tr}(\mathbf{A} - \mathbf{\hat{D}})'(\mathbf{A} - \mathbf{\hat{D}})
\end{equation}
and apply gradient descent to update the heuristics:
\begin{equation}
	\Delta \hat{d}_{ij} = - \eta \dfrac{\partial J}{\partial \hat{d}_{ij}} =  \eta (a_{ij} - \hat{d}_{ij})
\end{equation}
\begin{equation}
	\hat{d}_{ij} \leftarrow  \hat{d}_{ij} + \Delta \hat{d}_{ij}
\end{equation}
where $\eta$ is the learning rate.
%
%
%
%
\subsection{Recursive Excitation/Relaxation APF Algorithm}
In this section, we will present a novel recursive excitation/relaxation APF algorithm that completely eliminates the well-known problem of local minima with proof, which will be discussed in more details.
\subsubsection{Conventional Artificial Potential Field Algorithm}
 The conventional APF algorithm is based on the potential function in the form of \eqref{eq:conventional_APF} \cite{mrpp_apf}.
\begin{equation}
	\label{eq:conventional_APF}
	U_i(k,s) =  U_i^{att}(s) + U_i^{rep}(s)	
\end{equation} where $U_i(k,s)$ is the potential at position $s$ and time $k$ for robot $r_i$, $U_i^{att}(s)$ and $U_i^{rep}(s)$ are the attractive and repulsive potentials. Typically, the attractive potential is contributed by the goal position while the repulsive potential by obstacles and other robots to avoid collision \cite{mrpp_apf_animal}. Also, the potential from other robots is a function of time as robots are dynamic objects. Equation \eqref{eq:conventional_APF} can therefore be rewritten as \eqref{eq:conventional_APF2}.
\begin{equation}
	\label{eq:conventional_APF2}
	U_i(k,s) = U_i^g(s) + U_i^o(s) + U_i^r(k,s)
\end{equation}
where $U_i^g(s), U_i^o(s), U_i^r(k,s)$ are the potential components from goal, obstacle and other robots. The potential component at a point $s$ under influence of $s'$, denoted by $\phi(s, s')$, regardless of types, is generally expressed as a function of $d_p(s, s')$ \cite{mrpp_apf_guideline}.
\begin{equation}
	\label{eq:general_potential}
	\phi(s, s')	=	g\big(d_p(s,s')\big)
\end{equation}
where $g(x)$ is monotonically increasing for attraction and non-increasing for repulsion.
At each $k$, the robot selects a neighbor $s' \in \mathcal{N}(s_{r_i}(k))$ as the next step based on discrete gradient descent \cite{rpp_apf_num}, a modified version is shown in \eqref{eq:robot_pos_update}.
\begin{equation}
	\label{eq:robot_pos_update}
	s_{r_i}(k+1)	=	\argmin_{s' \in \mathcal{N}(s_{r_i}(k))} U_i(k, s')
\end{equation}
\subsubsection{Proposed Artificial Potential Field Algorithm}

While the conventional APF algorithm is highly efficient for path finding in various applications, it has a fatal limitation, the existence of local minima \cite{mrpp_apf}.
\begin{definition}
	\label{def:local_min}
	A position $s_m$ is a local minimum for robot $r_i$ at time $k$ iff $\forall s' \in \mathcal{N}_a(s_m)$, $U_i(k, s_m) \leq U_i(k, s') $.
\end{definition}
When a robot is at a local minimum $s_m$, it will be trapped indefinitely according to \eqref{eq:robot_pos_update}. This is considered a key disadvantage of the APF algorithm \cite{mrpp_apf_regression}.
The proposed RERAPF algorithm is an extension of the conventional APF algorithm designed to overcome the local minima problem. In this algorithm, two operations are introduced, namely the Excitation $f_e$ and Relaxation $f_r$ defined in \eqref{eq:excitation} and \eqref{eq:relaxation}.
\begin{eqnarray}
	\label{eq:excitation}
	f_e(x(k)) &=& \gamma x(k-1) \\
	\label{eq:relaxation}
	f_r(x(k)) &=& (1-\alpha)x(k-1) + \alpha x(k_0)
\end{eqnarray} where $k_0$ is an arbitrary starting time, also $\gamma \in (1,\infty)$ and $\alpha \in [0,1)$ are called the excitation factor and relaxation factor. Rewriting \eqref{eq:conventional_APF} to \eqref{eq:APF2}:
\begin{equation}
	\label{eq:APF2}
	U_i(k,s) = U_i^s(s) + U_i^d(k,s)
\end{equation} where $U_i^s(s) = U_i^g(s) + U_i^o(s)$ is the potential contributed by static objects and $U_i^d(k, s) = U_i^r(k, s)$ by dynamic objects, hereinafter referred to as the static potential and dynamic potential respectively. 
Instead of being a time-independent function, the static potential is modified into a time-recursive function. The potential update is shown in \eqref{eq:potential_update}.
\begin{equation}
	\label{eq:potential_update}
	U_i^s(k,s) = 
	\begin{cases}
		f_e(U_i^s(k-1,s)), & s_{r_i}(k) = s								\\
		f_r(U_i^s(k-1,s)),	&	otherwise
	\end{cases}	
\end{equation} $\forall s \in \mathcal{N}(s_{r_i}(k))$ with initial condition $U_i^s(k_0,s) = U_i^s(s)$. The purpose of excitation is to increase the potential at the position of the robot $r_i$, i.e., $U_i(k, s_{r_i}(k))$, hence attempting to remove any local minimum in case the robot $r_i$ is trapped. For each position $s$, we only calculate the static potential once until the current task is finished. The explored positions are stored in a set $\mathcal{E}_i$ for each robot $r_i$. We now formally present the RERAPF algorithm in Algorithm \ref{algo:rapf}. 

\setlength{\textfloatsep}{0pt}
\begin{figure}[h]
 \removelatexerror
  \begin{algorithm}[H]
  	\label{algo:rapf}
	\caption{Proposed RERAPF Algorithm}
	$k \leftarrow 0$ \\
	\For {$i \in [1,N]$} {
		$\mathcal{E}_i \leftarrow \emptyset$ 	\\
	}
	\While {not all robots reached goals} {
		\For{each $r_i \in \mathcal{R}$ with $T_i \neq \emptyset$} {
			\uIf {$s_{r_i}(k) \neq s_g^i$} {
		      \For {$s' \in \mathcal{N}(s_{r_i}(k))$} {
  				\uIf {$s' \not \in \mathcal{E}_i$} {
  					$U_i^s(k, s') \leftarrow U_i^g(s') + U_i^o(s')$	\\
  					$\mathcal{E}_i \leftarrow \mathcal{E}_i \cup \{s'\}$ \\
				}
				\uElseIf{$s' = s_{r_i}(k)$}{
		      		$U_i^s(k,s') \leftarrow f_e(U_i^s(k-1,s))$	\\
		      	}
		      	\Else{
		      		$U_i^s(k,s') \leftarrow f_r(U_i^s(k-1,s))$	\\
		      	}
	      		$U_i(k, s') \leftarrow U_i^s(k, s') + U_i^d(k, s')$ \\
		      }
		      $s_{r_i}(k+1) \leftarrow \argmin_{s' \in \mathcal{N}(s_{r_i}(k))} U_i(k,s')$	\\
			}
			\Else{
				$T_i \leftarrow T_i \backslash \{t_{i,1}\}$ \\
			}
	   }
	   $k \leftarrow k+1$ \\
	}
  \end{algorithm}
\end{figure}
\begin{definition}
	A local minima $s_m$ is temporary to $r_i$ if $\forall k_0$ $\exists k' \in (k_0, \infty)$ $\exists s' \in \mathcal{N}_a(s_m)$ s.t. $U_i(k',s') < U_i(k',s_m)$.
\end{definition}
\begin{assumption}
	\label{assum:apf_pos}
	The static potential function satisfies $U_i^s(k,s) \geq 0$ $\forall i,k,s$ with unique global minimum at $s=s_g^i$.
\end{assumption}
\begin{assumption}
	\label{assum:finite_dynamic_potential}
	The dynamic potential function satisfies $U_i^d(k,s) \in  [0,\infty)$ $\forall i,k,s$, i.e., always non-negative and finite, and $U_i^d(k,s) \gg U_i^s(s')$ if $\exists r \neq r_i$ $s_r(k) = s$ $\forall i,k,s,s'$.
\end{assumption}
Before we introduce the main theorem in this paper, we first introduce the following lemma about local minimum:
\begin{lemma}
	\label{lemma:local_min_temp}
	Any local minimum reached by robot $r_i$ is temporary to $r_i$ in the proposed RERAPF algorithm.
\end{lemma}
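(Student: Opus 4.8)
The plan is to exploit the asymmetry between the two branches of the static update \eqref{eq:potential_update}: while the robot sits at a reached local minimum $s_m$, the excitation $f_e$ inflates the static potential at $s_m$ geometrically, whereas every adjacent neighbour is subject to the relaxation $f_r$, which keeps its potential bounded. Since the dynamic potential is finite by \autoref{assum:finite_dynamic_potential}, the total potential at $s_m$ must eventually overtake that of a neighbour, so $s_m$ cannot stay a local minimum in the sense of \autoref{def:local_min}. Concretely, I would let $k_0$ denote the instant the robot arrives at $s_m$ and exhibit a later time $k'$ together with a neighbour $s' \in \mathcal{N}_a(s_m)$ satisfying $U_i(k',s') < U_i(k',s_m)$, which is exactly what the definition of a temporary local minimum requires.

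First I would record that, as long as $s_m$ remains the neighbourhood argmin, the update rule \eqref{eq:robot_pos_update} keeps the robot at $s_m$, so the first branch of \eqref{eq:potential_update} applies to $s_m$ at every step. Iterating $f_e$ from \eqref{eq:excitation} then gives $U_i^s(k_0+n,s_m) = \gamma^n\, U_i^s(k_0,s_m)$. Because $s_m$ is a reached local minimum other than the goal, \autoref{assum:apf_pos} forces $U_i^s(k_0,s_m) > U_i^s(k_0,s_g^i) \ge 0$, hence $U_i^s(k_0,s_m) > 0$ (the goal is the unique global minimum and is handled separately as task completion); since $\gamma > 1$ this sequence diverges, and adding the finite dynamic term of \autoref{assum:finite_dynamic_potential} shows $U_i(k_0+n,s_m) \to \infty$.

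Next I would bound the neighbours. For every $s' \in \mathcal{N}_a(s_m)$ the robot is absent, so the second branch of \eqref{eq:potential_update} governs $U_i^s(\cdot,s')$ through $f_r$ from \eqref{eq:relaxation}. As $f_r$ is a convex combination of the previous value and the fixed initial value $U_i^s(k_0,s')$ with weight $1-\alpha \in (0,1]$, its iterates never leave the interval spanned by those two numbers and so remain bounded; together with the finiteness of $U_i^d$ this yields a finite bound $B$ with $U_i(k,s') \le B$ along the trapping episode. Comparing the two estimates, there is a first $k'$ at which the divergent quantity $U_i(k',s_m)$ exceeds $B$, and for that $k'$ any unblocked neighbour satisfies $U_i(k',s') < U_i(k',s_m)$, which both contradicts $s_m$ being a local minimum and furnishes the witness demanded by the temporary condition.

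The step I expect to be most delicate is the universal quantifier over $k_0$ in the definition of temporariness, since relaxation pulls $U_i^s(\cdot,s_m)$ back towards its initial value once the robot leaves, so $s_m$ may reform and trap the robot on a later visit. The hard part will be arguing that the geometric factor $\gamma^n$ dominates the bounded neighbour potentials \emph{regardless} of the initial value $U_i^s(k_0,s_m)$, so that the escape argument applies verbatim to every trapping episode and hence for every starting time $k_0$. A secondary subtlety is guaranteeing that some neighbour really is a low-potential escape: a neighbour momentarily occupied by another robot carries a large potential through the $\gg$ clause of \autoref{assum:finite_dynamic_potential}, but it is still finite, so the unbounded growth at $s_m$ eventually surpasses it once those robots move on.
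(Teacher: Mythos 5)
Your proposal is correct and follows essentially the same route as the paper's proof: geometric growth of the excited static potential at $s_m$ versus bounded relaxation iterates at the neighbours, with the finite dynamic potential unable to prevent the eventual overtaking. The differences are cosmetic rather than structural: the paper works quantitatively, bounding the dynamic potential by its supremum/infimum and deriving a closed-form critical escape time $(\Delta k')_{crit}$, whereas you argue qualitatively by divergence, and in doing so you make explicit two details the paper leaves implicit---that Assumption~\ref{assum:apf_pos} guarantees $U_i^s(k_t,s_m)>0$ (so the geometric sequence actually diverges), and that even an occupied neighbour carries only a finite potential.
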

\begin{proof}
	Under assumption \ref{assum:apf_pos} and \ref{assum:finite_dynamic_potential}, consider an arbitrary local minimum $s_m$ in which robot $r_i$ is trapped at time $k_t$, i.e., $s_{r_i}(k_t)=s_m$. Also assume there exists an unoccupied neighbor for $s_m$. Then by the definition \ref{def:local_min}, $\forall s' \in \mathcal{N}_a(s_m)$, $U_i(k_t,s_m) \leq U_i(k_t,s')$ and $s_{r_i}(k_t+1)=s_m$. It is also possible that $s_{r_i}(k_t+1) = s'$ if $U_i(k_t,s')=U_i(k_t,s_m)$ which immediately implies that the robot has escaped from the local minimum $s_m$. Therefore we will focus on the case where $s_{r_i}(k_t+1)=s_m$.
	\begin{eqnarray}
		U_i(k_t+1,s_m) &=& f_e(U_i(k_t,s_m)) + U_i^d(k_t+1,s_m) \quad	\\
		U_i(k_t+1,s') &=& f_r(U_i(k_t,s')) + U_i^d(k_t+1,s')
	\end{eqnarray}
	If $\forall s' \in \mathcal{N}_a(s_m)$, $U_i(k_t+1,s_m)\leq U_i(k_t+1,s')$ then the above update rules will be repeated until $\exists k'>k_t, s' \in \mathcal{N}_a(s_m)$, $U_i(k',s_m)>U_i(k',s')$. The dynamic potential $U_i^d(k,s)$ is bounded with $\sup U_i^d(k,s)=m$ and $\inf U_i^d(k,s)=n$ where $m$ and $n$ are two non-negative finite numbers  that $m - n \geq 0$. Consider the extreme worst case scenario where $U_i^d(k,s_m)=n$ and $U_i^d(k,s')=m$ $\forall s' \in \mathcal{N}_a(s_m)$. For some $k'>k_t$,
	\begin{eqnarray}
		U_i(k',s_m)&=&\gamma^{k'-k_t}U_i^s(k_t,s_m) + n	\\
		U_i(k',s')&=&(1-\alpha)^{k'-k_t}U_i^s(k_t,s')	\nonumber \\
					&&{}+[1-(1-\alpha)^{k'-k_t}]U_i^s(s')+m
	\end{eqnarray}
	Let $s^* = \argmin_{s' \in \mathcal{N}_a(s_m)} U_i(k',s')$. If $r_i$ is to escape from $s_m$, the inequality \eqref{eq:k_ineq_unsol} must be satisfied.
	\begin{eqnarray}
		\label{eq:k_ineq_unsol}
		U_i(k',s_m) &>& U_i(k',s^*)	\\
		\label{eq:k_ineq_unsol2}
		\gamma^{\Delta k}U_i^s(k_t,s_m) &>& (1-\alpha)^{\Delta k} (U_i^s(k_t,s^*)-U_i^s(s^*))	\nonumber \\ 
		&& {}+ U_i^s(s^*) + m - n
	\end{eqnarray}
	where $\Delta k = k'- k_t$. It can be easily verified that $U_i^s(k_t, s^*) \geq (1-\alpha)^{\Delta k} (U_i^s(k_t,s^*) - U_i^s(s^*)) + U_i^s(s^*)$.
%
	Then consider the following inequality,
	\begin{equation}
		\label{eq:k_ineq}
		\gamma^{\Delta k'} U_i^s(k_t,s_m) > U_i^s(k_t,s^*) + m - n
	\end{equation}
	where $\Delta k' \geq \Delta k$. The inequality \eqref{eq:k_ineq} has a simple analytical solution for the critical point of $\Delta k'$ where $\Delta k' > (\Delta k')_{crit}$ is the region of solutions.
	\begin{equation}
		\label{eq:k_over}
		(\Delta k')_{crit} = \log_\gamma \dfrac{U_i^s(k_t,s^*) + m - n}{U_i^s(k_t,s_m)} \geq (\Delta k)_{crit}
	\end{equation}
Similarly, $\Delta k > (\Delta k)_{crit}$ is the region of solutions	 for \eqref{eq:k_ineq_unsol2}. Note that is an overestimation of the solution of \eqref{eq:k_ineq_unsol} for general case (where the dynamic potentials are not at extreme values), thus an upper bound. In other words, the time required to eliminate local minimum $s_m$ does not exceed $(\Delta k')_{crit} + 1$ regardlessly. Since $\exists \Delta k \leq (\Delta k')_{crit}+1 < \infty$ such that $U_i(k',s_m)>U_i(k',s^*)$, the local minimum $s_m$ is temporary to $r_i$.
	\end{proof}
	Lemma \ref{lemma:local_min_temp} shows that the introduction of excitation can remove local minimum in case a robot is trapped. From (25) of \cite{Tsang2018}, it is observed that the upper bound of $\Delta k$ decreases with increasing $\gamma$, meaning that a larger excitation factor shortens entrapment time. In addition, the introduction of relaxation factor also facilitates the escape of local minimum as it reduces the potentials of the neighbors to the original values over time.
\subsubsection{Semi-Completeness for Proposed RERAPF Algorithm}
The proposed RERAPF algorithm is capable of eliminating local minimum as proven in Lemma \ref{lemma:local_min_temp}, which results in semi-completeness shown in the following.
\begin{definition}
	An algorithm is semi-complete if and only if it is guaranteed to find a solution when there exists one but may not return false when there is none.
\end{definition}
\begin{theorem}
	\label{thm:main}
	For every $\alpha \in [0,1)$, $\exists \gamma \in (1,\infty)$ such that the proposed RERAPF algorithm is semi-complete.
\end{theorem}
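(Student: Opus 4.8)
The plan is to bootstrap the global liveness property---that the robot actually reaches its goal---from the purely local escape property already furnished by Lemma~\ref{lemma:local_min_temp}, by regarding the reachable free space as a finite graph on $\mathcal{S}$ whose edges join $1$-norm neighbors via $\mathcal{N}_a$. First I would make precise what ``a solution exists'' means: the goal $s_g^i$ lies in the same connected component of this graph as the start $s_{r_i}(k_0)$. The statement to be proved is then that, for an appropriately chosen $\gamma$, the trajectory produced by the descent rule \eqref{eq:robot_pos_update} together with the recursive update \eqref{eq:potential_update} reaches $s_g^i$ in finite time.

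The way $\gamma$ must depend on $\alpha$ is dictated by the competition between excitation and relaxation on a single cell: each step the cell is occupied its static potential is multiplied by $\gamma>1$, while during each step it is not occupied its excess over the baseline $U_i^s(s)$ decays by the factor $(1-\alpha)$ from \eqref{eq:relaxation}. I would select $\gamma$ large (in terms of $\alpha$ and of the dynamic-potential bounds $m,n$ of Assumption~\ref{assum:finite_dynamic_potential}) so that the entrapment bound $(\Delta k')_{crit}$ of \eqref{eq:k_over} is finite for every cell, guaranteeing that no position can remain the descent target indefinitely.

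The core argument is then by contradiction on liveness. Suppose the robot never reaches $s_g^i$. Since the reachable free space is finite, there is a nonempty set $V$ of non-goal cells visited infinitely often, and after some time the trajectory is confined to $V$. The intended mechanism is that repeated occupation drives the potentials on the boundary cells of $V$ above the baseline value of at least one cell of the frontier $\mathcal{N}_a(V)$, which---being (eventually) unvisited---relaxes back toward its bounded original value; once that happens, \eqref{eq:robot_pos_update} forces a step out of $V$, contradicting confinement. Hence the explored region must keep expanding, and since it is a connected finite set containing a path to the goal, it must eventually include a neighbor of $s_g^i$; one further descent step reaches $s_g^i$, which by Assumption~\ref{assum:apf_pos} is the unique global minimum and is therefore absorbing. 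This yields semi-completeness, the ``semi'' being immediate: if $s_g^i$ is unreachable the same dynamics merely keep the robot exploring its component forever without ever certifying infeasibility.

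The step I expect to be the main obstacle is precisely the confinement-versus-frontier argument. Lemma~\ref{lemma:local_min_temp} only rules out getting stuck at one minimum at a time; it does not preclude the robot oscillating indefinitely among a small cluster of cells whose status as local minima is repeatedly created and destroyed by the excitation/relaxation interplay. The delicate quantitative point is that when the inter-visit gaps within $V$ grow, relaxation can reset a cell toward baseline faster than any fixed $\gamma$ re-excites it, so ``each cell of $V$ diverges'' is not automatic. Turning the per-minimum escape bound into a monotone, space-filling exploration guarantee therefore requires a more global Lyapunov-type argument---controlling, say, a weighted sum of the boundary potentials of $V$ and using a pigeonhole/density bound on revisit frequency forced by the finiteness of $V$---and it is in pinning down the exact dependence of $\gamma$ on $\alpha$ that ensures net boundary growth outpaces relaxation where the real work lies.
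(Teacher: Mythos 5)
Your skeleton matches the paper's proof almost exactly: assume confinement in a region $V$ with $s_g^i \notin V$, argue that the potentials inside $V$ outgrow the (never-excited, hence non-increasing) potentials on the frontier $\mathcal{N}_a(V)$, conclude that a descent step must eventually exit $V$, and read off the ``semi'' from the unreachable-goal case. But the step you yourself flag as ``where the real work lies''---turning per-minimum escape into divergence of the potentials inside $V$ despite relaxation between visits---is left open in your proposal, and that is precisely the step the paper's proof supplies. The paper imposes (justifying it via Lemma~\ref{lemma:local_min_temp}) the constraint that every cell of $V$ is revisited with bounded gaps, models occupancy of each $s \in V$ as a Bernoulli process with relative frequency $p \in (0,1)$, and computes
\begin{equation*}
\mathbb{E}[U_i^s(k,s)] = \Big[\beta^{k-k_0} + (1-p)\alpha \textstyle\sum_{i=0}^{k-k_0-1}\beta^i\Big] U_i^s(s), \qquad \beta = p\gamma + (1-p)(1-\alpha),
\end{equation*}
so that the condition $\beta>1$, i.e.\ $\gamma > 1-\alpha+\alpha/p$, makes the expected potential of every cell of $V$ diverge while the frontier potentials stay bounded; since $V$ is finite, $p$ is bounded away from zero and a finite such $\gamma$ exists for every $\alpha \in [0,1)$. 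This is exactly the pigeonhole-on-revisit-frequency argument you gesture at, executed as a one-line expectation computation rather than a Lyapunov function, so your proposal is the paper's argument minus its decisive calculation.

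A second, smaller error: your proposed rule for choosing $\gamma$---make the entrapment bound $(\Delta k')_{crit}$ of \eqref{eq:k_over} finite---pins down nothing, because that logarithm is finite for \emph{every} $\gamma > 1$; Lemma~\ref{lemma:local_min_temp} needs no lower bound on $\gamma$ beyond $\gamma>1$. The genuine constraint coupling $\gamma$ to $\alpha$ only emerges from the frequency argument above: excitation occurring a fraction $p$ of the time must beat relaxation occurring a fraction $1-p$ of the time, which is what $\gamma > 1-\alpha+\alpha/p$ expresses. (To be fair, the paper's own closure of the gap is not airtight either---$p$ is a trajectory-dependent relative frequency rather than a true probability, and divergence in expectation is weaker than divergence of the realized potentials---but it is the paper's answer to exactly the question you left open, and your review of the difficulty correctly predicts where that answer has to live.)
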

\begin{proof}
	Assume robot $r_i$ is contained in a fixed region $V \subset \mathcal{S}$ indefinitely where $s_g^i \not \in V$ and $s_{r_i}(k) \in V$ $\forall k$. Also, we apply an additional constraint on $V$ such that all positions in $V$ must be reached by $r_i$ at some $k$, i.e., $\forall s \in V$ $\exists k$  $s_{r_i}(k)=s$ with finite maximum time intervals. Lemma \ref{lemma:local_min_temp} implies that the additional constraint is always achievable for small region $V$ and $\vert V \vert > 1$. For each $s \in V$, the corresponding potential can be modeled as a recursive Bernoulli random process with $P[s_{r_i}(k)=s]=p$. Because of the constraint on $V$, $p$ is not the true probability but rather a relative frequency, due to the assumption that all states will be reached by $r_i$ with finite interval, and therefore $p \in (0,1)$. 
	\begin{eqnarray}
		\mathbb{E}[U_i^s(k,s)]	&=&	p\mathbb{E}[f_e(U_i^s(k-1,s))] \nonumber \\
						& & +{}(1-p)\mathbb{E}[f_r(U_i^s(k-1,s))]		\\
						&=&	\bigg[\beta^{k-k_0} + (1-p)\alpha \sum_{i=0}^{k-k_0-1} \beta^i\bigg]U_i^s(s)
	\end{eqnarray}
	where $\beta = p\gamma+(1-p)(1-\alpha)$. It can be verified that $\lim_{k\rightarrow \infty} \mathbb{E}[U_i^s(k,s)] \rightarrow \infty$ when $\beta > 1$, and therefore $\gamma > 1 - \alpha + \alpha/p$. This implies that $U_i^s(k,s)$ will tend to increase, albeit not necessarily monotone, with large $k$. Since $p \in (0,1)$, there must exist a finite $\gamma$ satisfying this inequity.
	Assume the condition $\gamma > 1 - \alpha + \alpha/p$ is satisfied. If there exists a solution, then $\exists k'' \in (k_0, \infty)$ such that $\exists s'' \in \mathcal{N}_a(V)$ with finite non-increasing potential. For some finite $k^* > k''$ when $s_{r_i}(k^*) \in \mathcal{N}_a(s'') \cap V$, then $s'' \in \mathcal{N}_a(s_{r_i}(k^*))$ with $U_i(k^*, s'') < U_i(k^*, s_{r_i}(k^*))$, which implies that $s_{r_i}(k^*+1) \not \in V$ as at least one neighbor of $s_{r_i}(k^*)$ has lower potential.
	By contradiction, it is impossible for $r_i$ to be contained in a fixed region $V$ indefinitely. In other words, if $r_i$ is contained in a region $V$ indefinitely, $V$ must be expanding until $s_g^i \in V$. By definition of $V$, the robot $r_i$ will reach $s_g^i$ in finite time. 
	However, if there is no possible path to the goal, it will continue move inside $V$ indefinitely without reaching $s_g^i$, thus only semi-complete.
\end{proof}
\begin{figure*}[!htb]
\centering
	\begin{subfigure}[b]{0.3292\textwidth}
	\includegraphics[width=\textwidth]{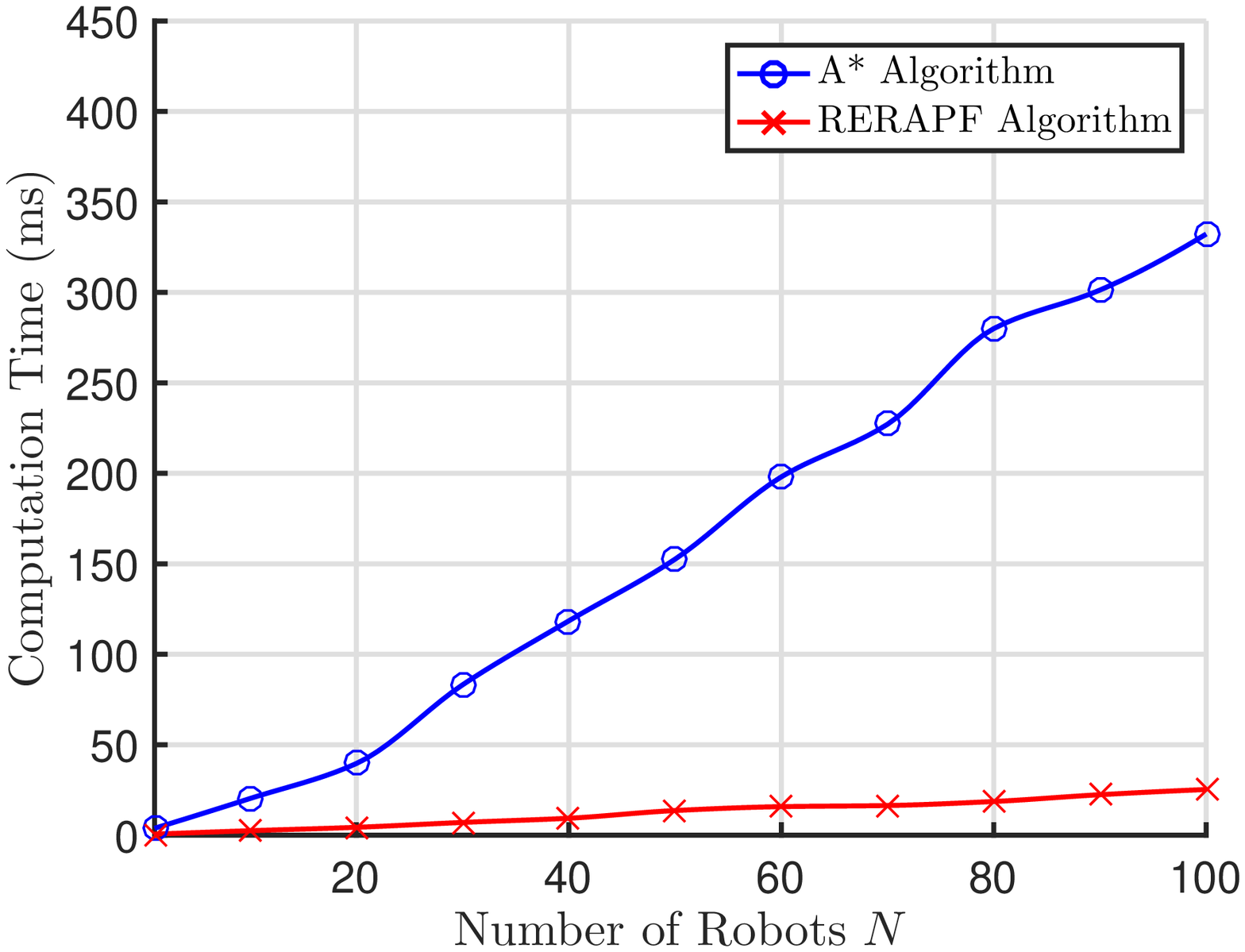}
	\caption{Computational time}
	\label{fig:comp_time}
	\end{subfigure}
	\begin{subfigure}[b]{0.3292\textwidth}
	\includegraphics[width=\textwidth]{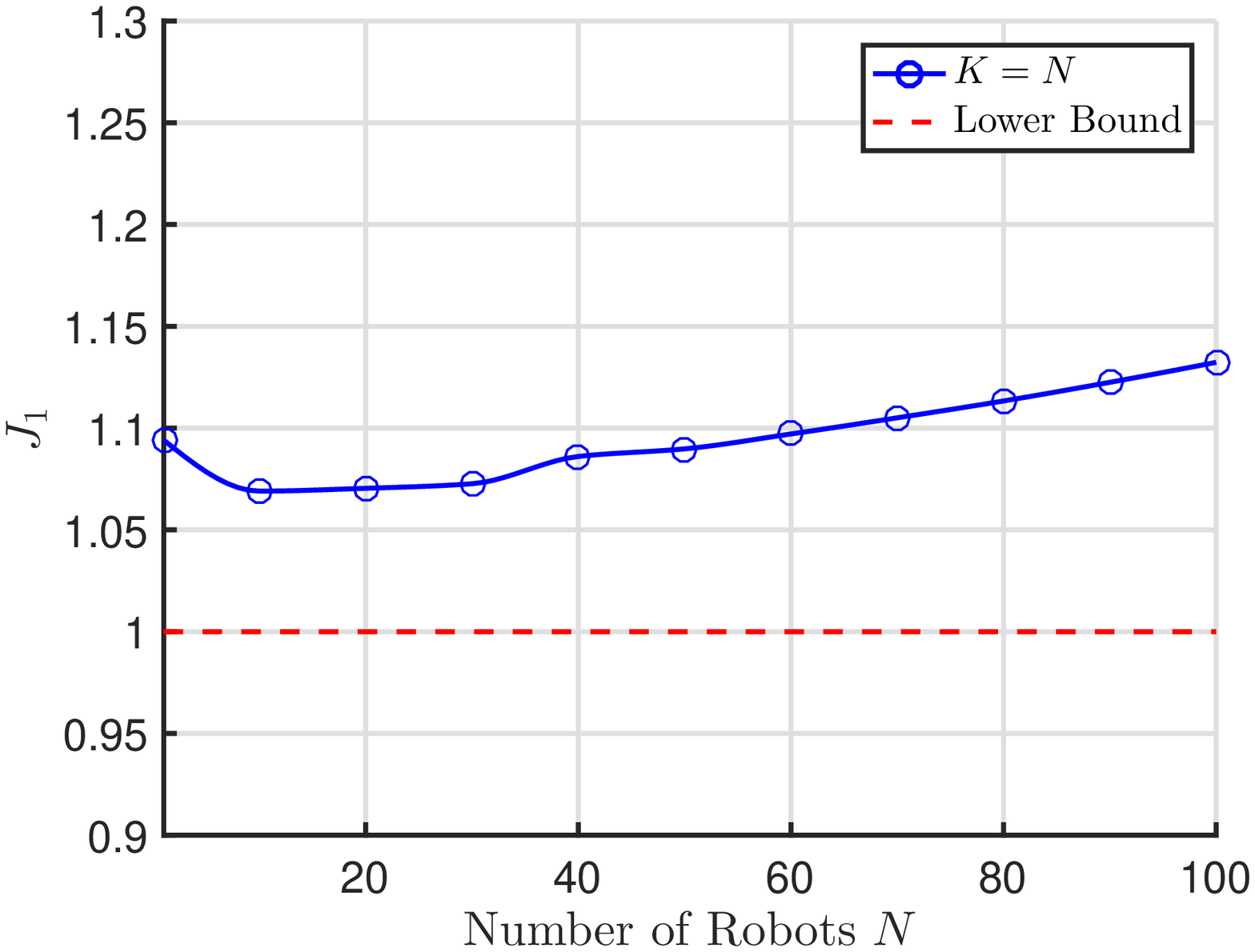}
	\caption{$J_1$ with $K=N$}
	\label{fig:f1}
	\end{subfigure}
	\begin{subfigure}[b]{0.3292\textwidth}
	\includegraphics[width=\textwidth]{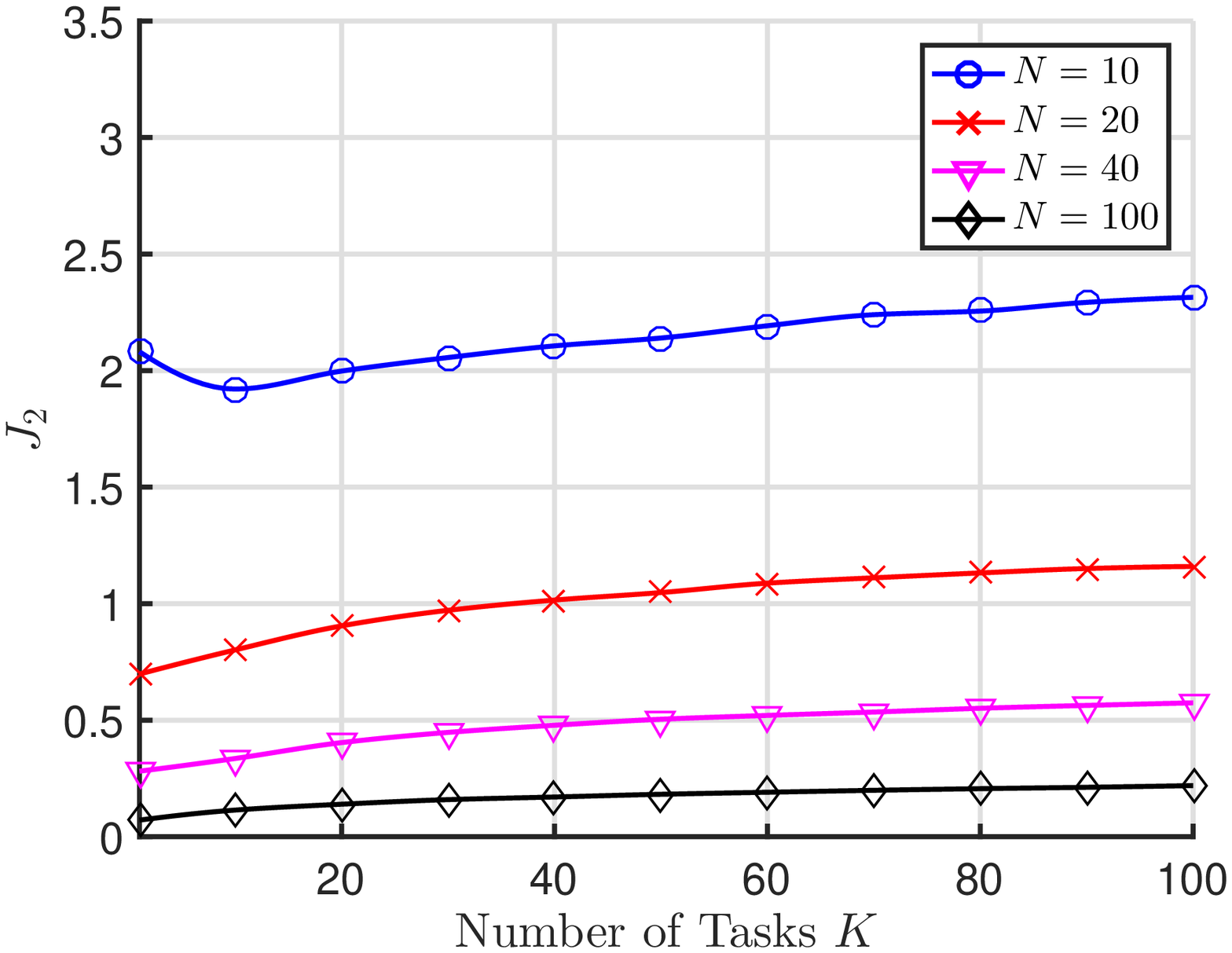}
	\caption{$J_2$ with $N=10,20,40,100$}
	\label{fig:f2}
	\end{subfigure}
	\begin{subfigure}[b]{0.3292\textwidth}
	\includegraphics[width=\textwidth]{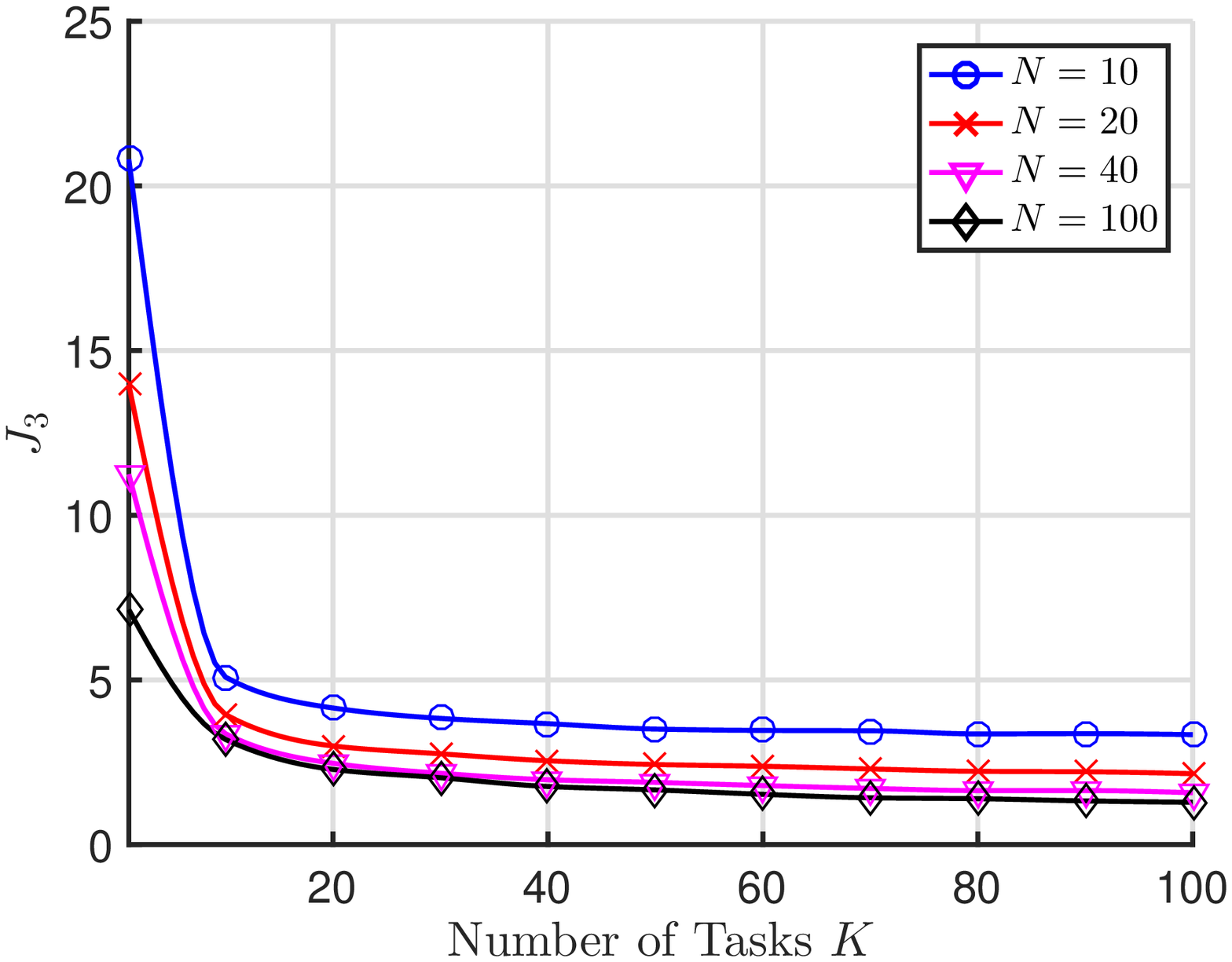}
	\caption{$J_3$ with $N=10,20,40,100$}
	\label{fig:f3}
	\end{subfigure}
	\begin{subfigure}[b]{0.3292\textwidth}
	\includegraphics[width=\textwidth]{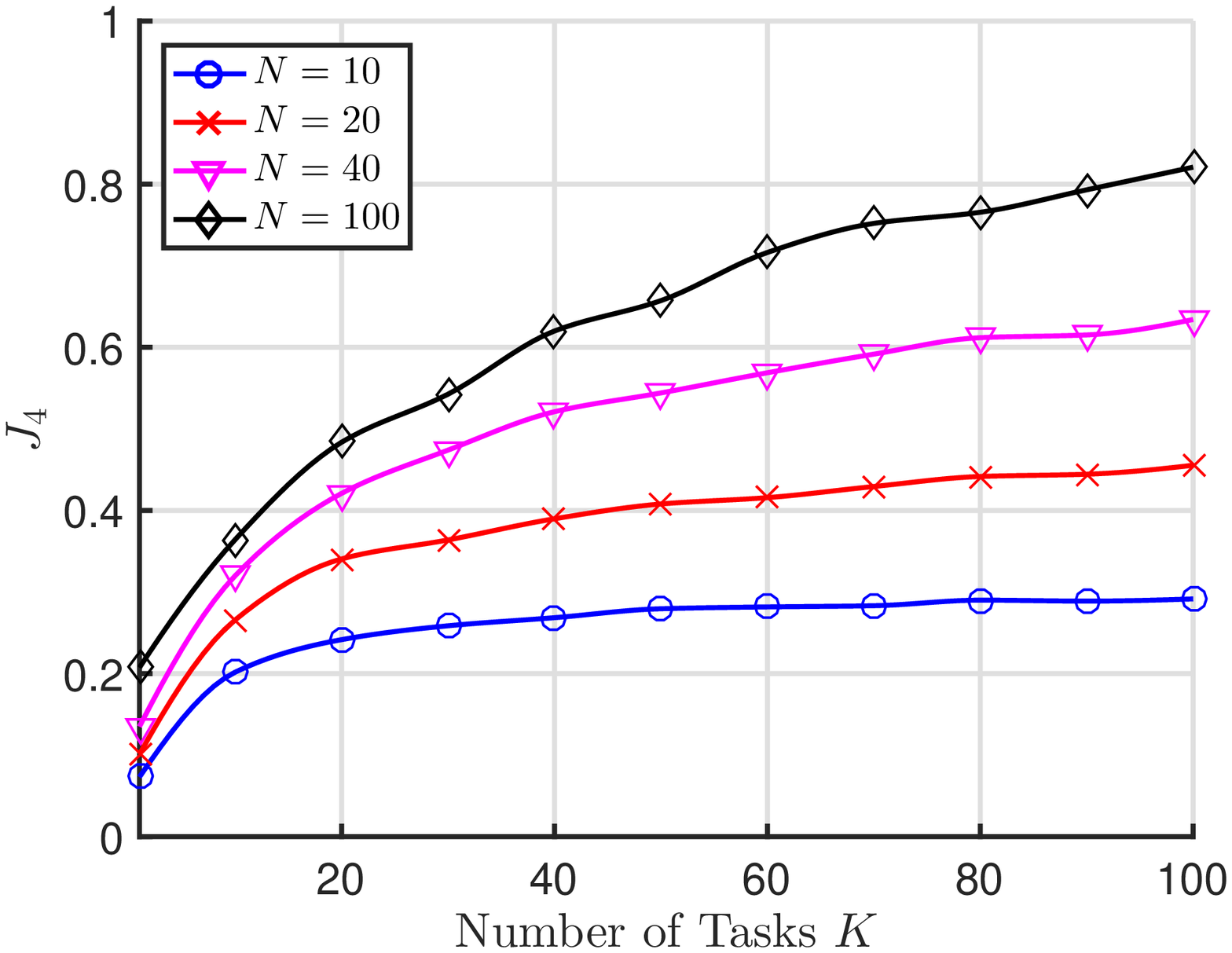}
	\caption{$J_4$ with $N=10,20,40,100$}
	\label{fig:f4}
	\end{subfigure}
  \caption{Simulation Results}
  \label{fig:sim_results}
\end{figure*}
\noindent
\begin{remark}
	Although a closed-form inequality was derived in \autoref{thm:main}, it is impractical to design $\alpha$ and $\gamma$ directly based on the condition because $p$ is unknown. The parameters still need to be chosen by trial and error. However, it is clear that $p$ decreases as $\vert V \vert$ increases. In other words, a larger layout should lead to smaller $p$, hence larger $\gamma$ and/or smaller $\alpha$ required. 
\end{remark}

\subsubsection{Implementation} The main implementation challenge is the design of potential functions. In the following context, we introduce an implementation based on proximity sensors information. We firstly rewrite the general potential function \eqref{eq:general_potential} as the weighted sum of the $p^j$-norms, addition to a constant $\varepsilon^j$, to the power of $e^j$. The superscript $j$ denotes the source of the potential ($g$ for goals, $o$ for obstacles, $r$ for robots), as different object may produce a different potential function.
\begin{equation}
	\phi^j(s,s')	=	\sum_{i=0}^\infty c_i^j \Big(d_{p_i^j}(s,s') + \varepsilon_i^j\Big)^{e_i^j}
\end{equation}
where $p_i^j, c_i^j, e_i^j, \varepsilon_i^j$ are design parameters. More specifically, $\varepsilon_i^r > 0$ $\forall i \in \{ x \colon c_x^r \neq 0 \}$ in order to bound the dynamic potential function to be finite.
To utilize the local sensors information, we introduce an alternative form of the potential function as follows. 
\begin{equation}
	\varphi^j(s, s')	=
	\begin{cases}
		\phi^j(s,s'),	& s' \in \mathcal{Q}(s)	\\
		0,	&	otherwise
	\end{cases}	
\end{equation}
where $\mathcal{Q}(s) = \bigcap_{s^* \in \mathcal{N}(s)} \mathcal{P}(s^*)$ to ensure a consistent static potential calculated at different location. This can prevent duplicate computation of potential at the same position, and ensure that the excitation/relaxation operations are correctly performed. The potential function is highly dependent on the sensor information for robot $r_i$ at time $k$. 
\begin{eqnarray}
	U_i^s(s)	&=&	\phi^g(s, s_g^i) + \sum_{o \in \mathcal{O}} \varphi^o(s, s^o)	\\
	U_i^d(k, s) &=& \sum_{r \in \mathcal{R}\backslash\{r_i\}} \varphi^r(s, s_r(k))
\end{eqnarray}

It is clear that this implementation can satisfy all assumptions made previously on the potential functions with proper $p_i^j, c_i^j, e_i^j, \varepsilon_i^j$.
The advantage of this approach is the adaptivity to change of the warehouse environment, for example, change of layout or unexpected obstacles.

\section{Simulation Results}
\label{section:sim}

In this section, we present some simulation results for the proposed warehouse automation system based on the cost functions introduced in \autoref{section:problem}. The simulation parameters are listed in \autoref{table:sim_par}, whilst the layout used in the simulation is in the same pattern as \autoref{fig:layout}. For each set of parameters, 500 runs of simulation were conducted to obtain the corresponding results.
\begin{table}[h]
	\centering
	\renewcommand{\arraystretch}{1.2}
	\begin{tabular}{|c|c|}
		\hline
		Layout Size	&	$81\times 80$	\\	\hline
		Number of Robots		&	$1 - 100$	\\ \hline
		Number of Tasks		&	$1 - 100$	\\ \hline
		Excitation Factor $\gamma$	&	$15$		\\	\hline
		Relaxation Factor $\alpha$	&	$0.05$		\\	\hline
	\end{tabular}
	\caption{Simulation parameters}
	\label{table:sim_par}
\end{table}

For the sake of simplicity, the following potential functions are adopted in the simulation.
\begin{align*}
	U_i^s(s)	&=	\phi^g(s, s_g^i) + \sum_{o \in \mathcal{O}} \varphi^o(s, s^o)	\\
	U_i^d(k, s) &= 0.01\sum_{r \in \mathcal{R}\backslash\{r_i\}}  \varphi^o(s, s_r(k))	\\
	\phi^g(s, s_g^i)	&=	d_\infty(s,s_g^i)	\\
	\varphi^o(s, s^o)	&=	
	\begin{cases}
		0.1 (d_2(s, s^o) + \varepsilon)^{-2} ,	& s^o \in \mathcal{Q}(s)	\\
		0,	&	otherwise
	\end{cases}		
\end{align*}
where $\varepsilon = 10^{-9}$ is an arbitrarily chosen small number, and $\mathcal{Q}(s) = \bigcap_{s^* \in \mathcal{N}(s)} \mathcal{P}(s^*)$.

In addition to the cost functions mentioned previously, we also evaluate the computational time of the proposed RERAPF algorithm compared with traditional A* algorithm. The simulation results are shown in \autoref{fig:sim_results}. It is shown that the computation time of A* algorithm is approximately $13$ times higher than the RERAPF algorithm on average. Also, since the proposed algorithm can be executed distributively on each robot, the actual computational load could be much lower for each robot. As for the path cost $J_1$, it is below $1.2$ within the simulation environment, i.e., the difference between the actual path length and the optimal length is less than $20\%$, with a maximum slope of $0.09\%$ per additional robot. This is the tradeoff between the computation time and optimality for the path planning algorithm.

The results for the cost functions $J_2$, $J_3$ and task completion rate $J_4$ with $N=10,20,40,100$ are shown in \autoref{fig:f2} to \autoref{fig:f4}. It is seen that all of $J_2$, $J_3$ and $J_4$ gradually converges as $K$ increases albeit with a slower converging rate for larger $K$, with diminishing marginal improvement with respect to increase in $N$. It is interesting to notice that $J_3$ and $J_4$ can be improved with larger $K$, and $J_2$ was slightly worsen as $K$ increases. The intuition behind this is that when the number of tasks is sufficiently large, it is more likely for the workload of robots distributed more uniformly, hence a lower bottleneck cost $J_3$. Also, multiple tasks can be completed by a single robot at once, hence the tasks done more efficiently, which leads to a lower $J_3$ and higher task completion rate $J_4$. On the other hand, when there are significantly more robots than tasks, the probability that there exist a robot close to the tasks is relatively high, therefore resulting in a lower average travel distance $J_2$.

\section{Conclusion and Future Work}
In this paper, we have presented a computationally efficient local path planning algorithm, namely the RERAPF algorithm, and a genetic task allocation algorithm with heuristic learning rule. The semi-completeness of the RERAPF algorithm was proven by showing that it is impossible for any robot to be trapped in any fixed region that does not contain the goal position. The performances of each subsystem, and the overall system were shown in simulation. Future work may include consideration of additional practical constraints, such as battery and robot capacities. Also, an equivalent RERAPF algorithm of the continuous time or continuous space configuration may also be considered to derive a more general algorithm.

\addtolength{\textheight}{0cm}  



%
%

\bibliographystyle{IEEEtran}
\bibliography{IEEEabrv,bibliography}

\begin{thebibliography}{10}
\providecommand{\url}[1]{#1}
\csname url@rmstyle\endcsname
\providecommand{\newblock}{\relax}
\providecommand{\bibinfo}[2]{#2}
\providecommand\BIBentrySTDinterwordspacing{\spaceskip=0pt\relax}
\providecommand\BIBentryALTinterwordstretchfactor{4}
\providecommand\BIBentryALTinterwordspacing{\spaceskip=\fontdimen2\font plus
\BIBentryALTinterwordstretchfactor\fontdimen3\font minus
  \fontdimen4\font\relax}
\providecommand\BIBforeignlanguage[2]{{%
\expandafter\ifx\csname l@#1\endcsname\relax
\typeout{** WARNING: IEEEtran.bst: No hyphenation pattern has been}%
\typeout{** loaded for the language `#1'. Using the pattern for}%
\typeout{** the default language instead.}%
\else
\language=\csname l@#1\endcsname
\fi
#2}}

\bibitem{mrta_mrpp_astar_ga}
C.~Liu and A.~Kroll, ``A centralized multi-robot task allocation for industrial
  plant inspection by using a* and genetic algorithms,'' in \emph{International
  Conference on Artificial Intelligence and Soft Computing}, 2012, pp.
  466--474.

\bibitem{mrta_hospital}
S.~Jeon, J.~Lee, and J.~Kim, ``Multi-robot task allocation for real-time
  hospital logistics,'' in \emph{2017 IEEE International Conference on Systems,
  Man, and Cybernetics (SMC)}, 2017, pp. 2465--2470.

\bibitem{mrpp_mstar}
G.~Wagner and H.~Choset, ``M*: A complete multirobot path planning algorithm
  with performance bounds,'' in \emph{2011 IEEE/RSJ International Conference on
  Intelligent Robots and Systems}, 2011, pp. 3260--3267.

\bibitem{mrta_mrpp}
L.~Sabattini, V.~Digani, C.~Secchi, and C.~Fantuzzi, ``Optimized simultaneous
  conflict-free task assignment and path planning for multi-agv systems,'' in
  \emph{2017 IEEE/RSJ International Conference on Intelligent Robots and
  Systems (IROS)}, 2017, pp. 1083--1088.

\bibitem{mrta_cnp_nn}
Q.~Yuan, Y.~Guan, B.~Hong, and X.~Meng, ``Multi-robot task allocation using cnp
  combines with neural network,'' \emph{Neural Computing and Applications},
  vol.~23, no. 7-8, pp. 1909--1914, Oct 2012.

\bibitem{rpp_svm}
K.~Charalampous, I.~Kostavelis, and A.~Gasteratos, ``Thorough robot navigation
  based on svm local planning,'' \emph{Robotics and Autonomous Systems}, pp.
  166--180, 2015.

\bibitem{mrpp_apf}
S.~S. Ge and Y.~J. Cui, ``Dynamic motion planning for mobile robots using
  potential field method,'' \emph{Autonomous Robots}, vol.~13, pp. 207--222,
  2002.

\bibitem{rpp_apf_deadlock}
T.~Weerakoon, K.~Ishii, and A.~A.~F. Nassiraei, ``An artificial potential field
  based mobile robot navigation method to prevent from deadlock,''
  \emph{Journal of Artificial Intelligence and Soft Computing Research},
  vol.~5, no.~3, pp. 189--203, 2015.

\bibitem{mrpp_apf_fuzzy}
J.~P.~C. Tuazon, K.~G.~V. Prado, N.~J.~A. Cabial, R.~L. Enriquez, F.~L.~C.
  Rivera, and K.~K.~D. Serrano, ``An improved collision avoidance scheme using
  artificial potential field with fuzzy logic,'' in \emph{2016 IEEE Region 10
  Conference (TENCON)}, 2016, pp. 291--296.

\bibitem{mrpp_apf_animal}
B.~Kov\'{a}cs, G.~Szayer, F.~Tajti, M.~Burdelis, and P.~Korondi, ``A novel
  potential field method for path planning of mobile robots by adapting animal
  motion attributes,'' \emph{Robotics and Autonomous Systems}, pp. 24--34,
  2016.

\bibitem{Tomioka2007}
S.~Tomioka, S.~Nisiyama, and T.~Enoto, ``Nonlinear least square regression by
  adaptive domain method with multiple genetic algorithms,'' \emph{IEEE
  Transactions on Evolutionary Computation}, vol.~11, no.~1, pp. 1--16, 2007.

\bibitem{Tsang2018}
K.~F.~E. Tsang, Y.~Ni, C.~F.~R. Wong, and L.~Shi, ``A novel warehouse
  multi-robot automation system with semi-complete and computationally
  efficient path planning and adaptive genetic task allocation algorithms,'' in
  \emph{arXiv:1809.07262}, Sep 2018.

\bibitem{mrpp_apf_guideline}
D.~H. Kim, H.~Wang, and S.~Shin, ``Decentralized control of autonomous swarm
  systems using artificial potential functions: Analytical design guidelines,''
  \emph{Journal of Intelligent and Robotic Systems}, vol.~45, no.~4, pp.
  369--394, Apr 2006.

\bibitem{rpp_apf_num}
J.~Barraquand, B.~Langlois, and J.-C. Latombe, ``Numerical potential field
  techniques for robot path planning,'' in \emph{Advanced Robotics, 1991.
  'Robots in Unstructured Environments', 91 ICAR., Fifth International
  Conference on}, 1991, pp. 1012--1017.

\bibitem{mrpp_apf_regression}
G.~Li, A.~Yamashita, H.~Asama, and Y.~Tamura, ``An efficient improved
  artificial potential field based regression search method for robot path
  planning,'' in \emph{Proceedings of 2012 IEEE International Conference on
  Mechatronics and Automation}, 2012, pp. 1227--1232.

\end{thebibliography}

\end{document}